\newcommand{\ket}[1]{|{#1}\rangle}
\newcommand{\bra}[1]{\langle{#1}|}
\newcommand{\braket}[2]{\langle{#1}\vert{#2}\rangle}
\newcommand{\nix}[1]{}
\DeclareMathOperator{\wt}{wt}
\newtheorem{proposition}{Proposition}
\newtheorem{theorem}[proposition]{Theorem}
\begin{document}

\title{Nonbinary Error-Detecting Hybrid Codes}
\author{Andrew Nemec and Andreas Klappenecker}
\date{}
\nocite{*}
\maketitle

\begin{abstract}
Hybrid codes simultaneously encode both quantum and classical information, allowing for the transmission of both across a quantum channel. We construct a family of nonbinary error-detecting hybrid stabilizer codes that can detect one error while also encoding a single classical bit over the residue class rings $\mathbb{Z}_{q}$ inspired by constructions of nonbinary non-additive codes.
\end{abstract}

\section{Introduction}

Hybrid codes allow for the simultaneous transmission of both quantum and classical information across a quantum channel. While it has long been known that simultaneous transmission can provide an advantage over the time-sharing of the channel for certain small error rates, see \cite{Devetak2005}, most of the early work on the topic focused on information-theoretic results, see \cite{Hsieh2010a, Hsieh2010b, Yard2005}, while the problem of constructing finite-length hybrid codes remained largely overlooked.

The first examples of hybrid codes were given by Kremsky, Hsieh, and Brun \cite{Kremsky2008}, who introduced them as a generalization of entanglement-assisted stabilizer codes. Later, Grassl, Lu, and Zeng \cite{Grassl2017} gave multiple examples of small hybrid codes constructed using an approach inspired by the construction of nonadditive codeword stabilized quantum codes. Remarkably, these codes provide an advantage over optimal quantum codes regardless of the error rate. Recently, several families of hybrid codes have been constructed including several families constructed by the authors \cite{Nemec2019} for the Pauli channel using stabilzer pasting and a family constructed by Li, Lyles, and Poon \cite{Li2019} for fully correlated quantum channels. An operator-theoretic approach to hybrid codes has also been put forward in \cite{Beny2007a, Beny2007b, Majidy2018}.

In \cite{Nemec2019}, the authors constructed several families of binary hybrid codes with good parameters, including a family of $\left[\!\left[n,n-3\!:\!1,2\right]\!\right]_{2}$ error-detecting codes where $n$ is odd. In this paper we provide a generalization of this family to hybrid stabilizer codes over $\mathbb{Z}_{q}$, inspired by the non-additive nonbinary quantum codes constructed from qudit graph states by Hu et al. \cite{Hu2008} and Looi et al. \cite{Looi2008}, as well as the family of single error-detecting codes given by Smolin, Smith, and Wehner \cite{Smolin2007}.

\subsection{Nonbinary Quantum Codes}
A quantum code is a subspace of a Hilbert space that allows for the recovery of encoded quantum information even in the presence of arbitrary errors on a certain number of physical qudits. A quantum code has parameters $\left(\!\left(n,K,d\right)\!\right)_{q}$ if and only if it can encode a superposition of $K$ orthogonal quantum states into the Hilbert space $\left(\mathbb{C}^{q}\right)^{\otimes n}\cong\mathbb{C}^{q^{n}}$, while protecting the quantum information against all errors ocurring on less than $d$ physical qubits.

Most generalizations of quantum codes from the binary alphabets to the case where $q>2$ are constructed over the finite fields $\mathbb{F}_{q}$, where $q$ is a prime power, see \cite{Ashikhmin2001, Ketkar2006, Rains1999b}. In this paper, we instead follow \cite{Hu2008, Looi2008, Smolin2007} and construct codes over $\mathbb{Z}_{q}$ for reasons that will become apparent in Section \ref{codeconstrsec}. Let $a,b\in\mathbb{Z}_{q}$. We define the unitary operators $X\!\left(a\right)$ and $Z\!\left(b\right)$ on $\mathbb{C}^{q}$ as $$X\!\left(a\right)\ket{x}=\ket{x+a}\text{ and } Z\!\left(b\right)\ket{x}=\omega^{bx}\ket{x},$$ where $\omega=e^{2\pi i/q}$. The operators $X\!\left(a\right)$ and $Z\!\left(b\right)$may be viewed as a generalization of the Pauli-$X$ bit-flip error and the Pauli-$Z$ phase error respectively. The set $\mathcal{E}=\left\{X\!\left(a\right)Z\!\left(b\right)\mid a,b\in\mathbb{Z}_{q}\right\}$ forms a nice error basis on $\mathbb{C}^{q}$, see \cite{Klappenecker2002, Klappenecker2003, Knill1996}, meaning any error on a single qudit may be written as a linear combination of elements from $\mathcal{E}$. Additionally, any error on $\mathbb{C}^{q^{n}}$ may be written as a linear combination of errors from $\mathcal{E}_{n}=\mathcal{E}^{\otimes n}=\left\{E_{1}\otimes E_{2}\otimes\cdots\otimes E_{n}\mid E_{k}\in\mathcal{E}, 1\leq k\leq n\right\}$. By correcting errors from $\mathcal{E}_{n}$ we are able to deal with arbitrary errors on the $n$ qudits that are linear combinations of those errors. The weight $\wt\!\left(E\right)$ of an error $E\in\mathcal{E}_{n}$ is the number of non-identity tensor components it contains.

A quantum code $\mathcal{C}$ has the ability to detect an error $E\in\mathcal{E}_{n}$ if it either reports than an error occured or reports no error and returns a projection of the message back onto $\mathcal{C}$. Formally, the Knill-Laflamme conditions tell us that an error $E$ is detectable by a quantum code $\mathcal{C}$ if and only if $PEP=\lambda_{E}P$ for some scalar $\lambda_{E}$, where $P$ is the orthogonal projector onto $\mathcal{C}$, see \cite{Knill1997}.

Stabilizer codes are perhaps the most important class of quantum codes, and are analogous to the linear and additive codes in classical coding theory (hence they are also refered to as additive codes). Stabilizer codes are completely determined by their stabilizer group $\mathcal{S}$, an abelian subgroup of $\mathcal{E}_{n}$, and the code is defined as the subspace spanned by all joint eigenvectors of $\mathcal{S}$ with eigenvalue 1. Since this subspace will always have dimension $K=q^{k}$, we say the code has parameters $\left[\!\left[n,k,d\right]\!\right]_{q}$ to denote it as a stabilizer code.

\subsection{Hybrid Codes}

In addition to transmitting quantum information, we now want to simultaneously encode a classical message in with the encoded quantum state. A hybrid code has parameters $\left(\!\left(n,K\!:\!M,d\right)\!\right)_{q}$ if and only if it can simultaneously encode a superposition of $K$ orthogonal quantum states as well as one of $M$ different classical states into $\left(\mathbb{C}^{q}\right)^{\otimes n}\cong\mathbb{C}^{q^{n}}$, a Hilbert space of dimension $q^{n}$, while protecting both the quantum and classical information against all errors of weight less than $d$. 

An $\left(\!\left(n,K\!:\!M,d\right)\!\right)_{q}$ hybrid code $\mathcal{C}$ can be described by a collection of $M$ orthogonal quantum codes $\mathcal{C}_{m}$ of dimension $K$, each indexed by a classical message $m\in\left[M\right]=\left\{0,1,\dots, M-1\right\}$. To transmit a quantum state $\varphi$ and a classical message $m$, we encode $\varphi$ into the quantum code $\mathcal{C}_{m}$. The Knill-Laflamme conditions for quantum codes can be generalized to hybrid codes, allowing us to characterize detectable errors: an error $E$ is detectable by the hybrid code $\mathcal{C}$ if and only if \begin{equation}
P_{b} E P_{a} = \begin{cases}
\lambda_{E,a} P_a & \text{if $a=b$}, \\
0 & \text{if $a\neq b$}
\end{cases},
\end{equation}
for some scalar $\lambda_{E,a}$ depending on both the error $E$ and the classical message $a$, where $P_{a}$ is the orthogonal projector onto the quantum code $\mathcal{C}_{a}$. Equivalently, if $\left\{\ket{c_{i}^{\left(m\right)}}\right\}$ are the codewords of the inner code $\mathcal{C}_{m}$, we have that $E$ is detectable by $\mathcal{C}$ if and only if \begin{equation} \bra{c_{j}^{\left(b\right)}}E\ket{c_{i}^{\left(a\right)}}=\lambda_{E,a}\delta_{i,j}\delta_{a,b}.\end{equation}

If both the inner codes and outer code happen to be stabilizer codes, we say the code is a hybrid stabilizer code with parameters $\left[\!\left[n,k\!:\!m,d\right]\!\right]_{q}$. In this case, the codes have some additional structure, so each inner code can be viewed as a translation from a seed code $\mathcal{C}_{0}$ by an operator $t_{m}\in\mathcal{E}_{n}\setminus Z\!\left(\mathcal{S}_{0}\right)$, where $Z\!\left(\mathcal{S}_{0}\right)$ is the centralizer in $\mathcal{E}_{n}$ of the stabilizer $\mathcal{S}_{0}$ of the seed code $\mathcal{C}_{0}$, so that $\mathcal{C}_{m}=t_{m}\mathcal{C}_{0}$.

There are multiple simple constructions of hybrid codes using quantum codes described by Grassl et al. \cite{Grassl2017}:

\begin{proposition}\label{trivial}
Hybrid codes can be constructed using the following ``trivial" constructions:
\begin{enumerate}
\item Given an $\left(\!\left(n,KM,d\right)\!\right)_{q}$ quantum code of composite dimension $KM$, there exisits a hybrid code with parameters $\left(\!\left(n,K\!:\!M,d\right)\!\right)_{q}$.
\item Given an $\left[\!\left[n,k\!:\!m,d\right]\!\right]_{q}$ hybrid code with $k>0$, there exists a hybrid code with parameters $\left[\!\left[n,k-1\!:\!m+1,d\right]\!\right]_{q}$.
\item Given an $\left[\!\left[n_{1},k_{1},d\right]\!\right]_{q}$ quantum code and an $\left[n_{2},m_{2},d\right]_{q}$ classical code, there exists a hybrid code with parameters $\left[\!\left[n_{1}+n_{2},k_{1}\!:\!m_{2},d\right]\!\right]_{q}$.
\end{enumerate}
\end{proposition}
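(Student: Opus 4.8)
\emph{Proof idea.} The three statements are independent, so the plan is to treat each in turn: parts~(1) and~(2) are ``refinement'' arguments that split an existing code into finer orthogonal pieces, while part~(3) is a ``tensoring'' argument. In every case the work reduces to verifying the hybrid Knill--Laflamme conditions, the one feature to bear in mind being that the scalar $\lambda_{E,a}$ there is allowed to depend on the classical label $a$. For part~(1), let $P$ be the projector onto the $\left(\!\left(n,KM,d\right)\!\right)_{q}$ quantum code $\mathcal{C}$; I would fix an arbitrary orthogonal decomposition $\mathcal{C}=\bigoplus_{a\in[M]}\mathcal{C}_{a}$ into subspaces of dimension $K$, with projectors $P_{a}$, and take the $\mathcal{C}_{a}$ as the inner codes. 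Since $\mathcal{C}_{a}\subseteq\mathcal{C}$ we have $P_{b}P=P_{b}$ and $PP_{a}=P_{a}$, so for any $E$ with $\wt(E)<d$ the detection identity $PEP=\lambda_{E}P$ yields $P_{b}EP_{a}=P_{b}(PEP)P_{a}=\lambda_{E}P_{b}P_{a}=\lambda_{E}\delta_{a,b}P_{a}$, which is the hybrid condition with $\lambda_{E,a}=\lambda_{E}$.

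Part~(2) is the same refinement performed inside a hybrid stabilizer code while retaining the stabilizer structure. Writing the inner codes as $\mathcal{C}_{m}=t_{m}\mathcal{C}_{0}$ for a seed stabilizer code $\mathcal{C}_{0}$ with stabilizer $\mathcal{S}_{0}$, and using $k>0$ to pick a logical operator $\bar{Z}$ of $\mathcal{C}_{0}$ together with a conjugate partner $\bar{X}$, I would take the new seed to be the stabilizer code of $\mathcal{S}_{0}'=\langle\mathcal{S}_{0},\bar{Z}\rangle$, and note that the $\omega^{j}$-eigenspace of $\bar{Z}$ inside $\mathcal{C}_{0}$ equals $\bar{X}^{j}$ times the new seed, so each new inner code $t_{m}\bar{X}^{j}\mathcal{C}_{0}'$ has dimension $q^{k-1}$. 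Indexed by $(m,j)\in[M]\times[q]$, these are then $q^{m+1}$ translates of a single stabilizer code; they are mutually orthogonal (within a fixed $m$ because distinct powers $\bar{X}^{j}$ land in distinct $\bar{Z}$-eigenspaces, across distinct $m$ because the original $\mathcal{C}_{m}$ are orthogonal), and since each new projector sits below some $P_{a}$ with the $P_{a}$ summing back up to it, the detection condition transfers verbatim from part~(1).

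For part~(3), put the $\left[\!\left[n_{1},k_{1},d\right]\!\right]_{q}$ quantum code $\mathcal{C}_{Q}$ on the first $n_{1}$ qudits and use the last $n_{2}$ to carry the classical code $[n_{2},m_{2},d]_{q}$, defining inner codes $\mathcal{C}_{c}=\mathcal{C}_{Q}\otimes\ket{c}$ with $\ket{c}=\ket{c_{1}}\otimes\cdots\otimes\ket{c_{n_{2}}}$, one for each classical codeword $c$; these are $q^{m_{2}}$ mutually orthogonal spaces of dimension $q^{k_{1}}$. For $E=E_{1}\otimes E_{2}$ with $\wt(E)<d$, a matrix element between codewords factors as $(\bra{\psi_{j}}E_{1}\ket{\psi_{i}})(\bra{c'}E_{2}\ket{c})$; the first factor is $\lambda_{E_{1}}\delta_{i,j}$ by Knill--Laflamme for $\mathcal{C}_{Q}$ since $\wt(E_{1})\leq\wt(E)<d$, and writing $E_{2}=\bigotimes_{\ell}X(a_{\ell})Z(b_{\ell})$ the second factor vanishes unless $c'=c+(a_{1},\dots,a_{n_{2}})$, whose support lies in that of $E_{2}$ and hence has size $<d$; by the classical minimum distance this forces $c'=c$, after which the second factor is the scalar $\prod_{\ell}\omega^{b_{\ell}c_{\ell}}$. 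Collecting terms gives the hybrid condition with $\lambda_{E,c}=\lambda_{E_{1}}\prod_{\ell}\omega^{b_{\ell}c_{\ell}}$.

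I expect the only real friction to be in part~(2): one must make sure that over $\mathbb{Z}_{q}$, which is not a field, the logical pair $\bar{X},\bar{Z}$ can still be chosen so that $\langle\mathcal{S}_{0},\bar{Z}\rangle$ is abelian and the $q$ eigenspaces of $\bar{Z}$ in $\mathcal{C}_{0}$ are equidimensional translates of the new seed code; everything else is routine bookkeeping, which is presumably why Grassl et al.\ call these the ``trivial'' constructions. The one conceptual point worth flagging, in part~(3), is that $\lambda_{E,c}$ genuinely depends on $c$ through the phase $\prod_{\ell}\omega^{b_{\ell}c_{\ell}}$ that the $Z$-part of $E_{2}$ picks up on $\ket{c}$, so it is essential to invoke the hybrid Knill--Laflamme conditions rather than the plain quantum ones.
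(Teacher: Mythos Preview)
The paper does not actually prove this proposition; it is stated with attribution to Grassl, Lu, and Zeng and no argument is given, so there is nothing in the paper to compare your sketch against. On its own merits your proof is correct in all three parts: the projector sandwich $P_{b}(PEP)P_{a}=\lambda_{E}P_{b}P_{a}$ in part~(1), the logical-operator refinement in part~(2), and the tensor factorization in part~(3) are exactly the standard arguments one would supply. The concern you flag in part~(2) about working over $\mathbb{Z}_{q}$ rather than a field is reasonable to mention but does not obstruct the argument: the hypothesis that the code is an $\left[\!\left[n,k\!:\!m,d\right]\!\right]_{q}$ hybrid stabilizer code with $k>0$ already carries the existence of a logical pair $\bar{X},\bar{Z}$ of order $q$ on $\mathcal{C}_{0}$, and the resulting $\bar{Z}$-eigenspace decomposition into $q$ equidimensional pieces follows from that. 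Your remark that $\lambda_{E,c}$ in part~(3) genuinely depends on $c$ through the phase picked up by the $Z$-part of $E_{2}$ is the right conceptual point and is exactly why the hybrid Knill--Laflamme conditions, rather than the ordinary ones, are what is being verified.
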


In each of these cases the sender is effectively substituting classical information for quantum information, which depending on the context may be considered wasteful. In \cite{Grassl2017}, Grassl et al. showed it was possible to construct \emph{genuine} hybrid codes that provide an advantage over these simple codes, and provided examples of such codes found using an exhaustive search of small parameters. In \cite{Nemec2019} the authors constructed several infinite families of genuine hybrid codes, including a family of binary single error-detecting codes which we generalize to the nonbinary case in the next section.

\section{Family of Hybrid Codes over $\mathbb{Z}_{q}$}\label{codeconstrsec}

The first good non-additive quantum code (that is a quantum code that is not a stabilizer code) was the $\left(\!\left(5,6,2\right)\!\right)_{2}$ code given by Rains et al. \cite{Rains1997}. This code outperforms the optimal $\left[\!\left[5,2,2\right]\!\right]_{2}$ stabilizer code, and was further generalized by Rains \cite{Rains1999a} into a family of odd-length non-additive codes that outperform optimal stabilizer codes. However, for an odd-length $\left(\!\left(n, K, 2\right)\!\right)$ quantum code we have the following bound: \begin{equation}\label{rainsbound} K\leq2^{n-2}\left(1-\frac{1}{n-1}\right),\end{equation} and many families of codes that approach this bound have been constructed. In \cite{Nemec2019}, the authors gave a construction for a familiy of hybrid stabilizer codes with parameters $\left[\!\left[n,n-3\!:\!1,2\right]\!\right]_{2}$ that beat this bound.

Nonbinary quantum codes with similar parameters were hinted at by Rains in \cite{Rains1999a}, and first given by Smolin et al. \cite{Smolin2007} as a generalization of their family of non-additive binary codes. Soon after, further families were constructed by Hu et al. \cite{Hu2008} and Looi et al. \cite{Looi2008} using qudit graph states. All of these families are codes over integer rings rather than finite fields, and our construction of nonbinary  hybrid stabilizer codes will follow in their footsteps. The reason we choose to construct codes over $\mathbb{Z}_{q}$ rather than $\mathbb{F}_{q}$ is due to the following result of Grassl and R{\"o}tteler:

\begin{theorem}[{\cite[Theorem 12]{Grassl2015}}]
Let $q>1$ be an arbitrary integer, not necessarily a prime power. Quantum MDS codes $\mathcal{C}=\left[\!\left[n,n-2,2\right]\!\right]_{q}$ exist for all even length $n$, and for all length $n\geq 2$ when the dimension $q$ of the quantum systems is an odd integer or is divisible by 4.
\end{theorem}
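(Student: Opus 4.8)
The plan is to construct, for each admissible pair $(n,q)$, an abelian subgroup $\mathcal{S}\subseteq\mathcal{E}_{n}$ of order $q^{2}$ whose joint $+1$-eigenspace contains no nontrivial weight-one logical operator; the resulting code then has parameters $\left[\!\left[n,n-2,2\right]\!\right]_{q}$, and it is automatically MDS because the quantum Singleton bound forces $d\le 2$ once $k=n-2$, so ``MDS'' is a consequence of the parameters rather than an extra requirement. I would first reduce to prime-power alphabets: if $q=\prod_{i}p_{i}^{a_{i}}$ then $\mathbb{Z}_{q}\cong\prod_{i}\mathbb{Z}_{p_{i}^{a_{i}}}$ and $\mathbb{C}^{q}\cong\bigotimes_{i}\mathbb{C}^{p_{i}^{a_{i}}}$, so an $\left[\!\left[n,n-2,2\right]\!\right]_{q}$ code can be assembled coordinate-by-coordinate from such codes over the rings $\mathbb{Z}_{p_{i}^{a_{i}}}$; the hypothesis ``$q$ odd or $4\mid q$'' is precisely the statement that no factor $p_{i}^{a_{i}}$ equals $2$, which is necessary because $\left[\!\left[n,n-2,2\right]\!\right]_{2}$ is excluded for odd $n$ by \eqref{rainsbound}. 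It therefore suffices to treat $q$ a prime power other than $2$.

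Encode a candidate stabilizer by a $2\times 2n$ matrix $(A\mid B)$ over $\mathbb{Z}_{q}$ whose two rows are the symplectic representatives of the two generators. Then the generators commute precisely when the symplectic inner product of the rows vanishes, $\sum_{j=1}^{n}\bigl(A_{1j}B_{2j}-A_{2j}B_{1j}\bigr)\equiv 0\pmod q$, and every nontrivial weight-one error on coordinate $j$ is detected precisely when $A_{1j}B_{2j}-A_{2j}B_{1j}\in\mathbb{Z}_{q}^{\times}$. For $n$ even and arbitrary $q$ I would take $g_{1}=X(1)^{\otimes n}$ and $g_{2}=Z(1)\otimes Z(-1)\otimes Z(1)\otimes\cdots\otimes Z(-1)$; the coordinate quantities are the units $(-1)^{j-1}$, which sum to $0$, and a short computation in the common $X(1)^{\otimes n}$-eigenbasis shows the joint $+1$-eigenspace has dimension $q^{n-2}$. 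For $n$ odd and $q$ an odd prime power I would keep $g_{1}=X(1)^{\otimes n}$ and take $g_{2}=\bigotimes_{j}Z(u_{j})$ for units $u_{1},\dots,u_{n}\in\mathbb{Z}_{q}^{\times}$ with $\sum_{j}u_{j}\equiv 0\pmod q$; such $u_{j}$ exist whenever $q\ge 3$ is odd and $n\ge 3$, for instance $(n-3)/2$ cancelling pairs $1,-1$ together with the triple $1,1,q-2$, using that $q-2\in\mathbb{Z}_{q}^{\times}$ since $q$ is odd.

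The main obstacle is $n$ odd with $q=2^{a}$, $a\ge 2$, where the recipe above provably cannot be carried out: it would demand units $u_{1},\dots,u_{n}\in\mathbb{Z}_{2^{a}}^{\times}$ with $\sum_{j}u_{j}\equiv 0\pmod{2^{a}}$, which is impossible because every unit of $\mathbb{Z}_{2^{a}}$ is odd and $n$ is odd; more generally, no distance-two stabilizer code whose stabilizer is a \emph{free} rank-two $\mathbb{Z}_{q}$-module exists for these parameters. The approach I would pursue is to let $\mathcal{S}$ have order $q^{2}$ but \emph{not} be free of rank two, with its torsion supported on the ideal $2^{a-1}\mathbb{Z}_{2^{a}}\cong\mathbb{Z}_{2}$: choose the reduction of $\mathcal{S}$ modulo $2$ to be the stabilizer of a length-$n$, distance-two code over $\mathbb{Z}_{2}$ in which every coordinate is ``covered'', then lift that binary stabilizer to $\mathbb{Z}_{2^{a}}$ one generator at a time, adjusting the lifted entries so that the lifted group remains abelian over $\mathbb{Z}_{2^{a}}$, contains no nontrivial scalar multiple of the identity (a quadratic-form condition on the lift), and keeps order exactly $q^{2}$. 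Since the theorem only asks for a quantum MDS code and not a stabilizer one, an alternative is to abandon additivity and build a non-additive $\left(\!\left(n,q^{n-2},2\right)\!\right)_{q}$ code directly, in the spirit of the non-additive constructions of Rains \cite{Rains1999a} and of Smolin, Smith, and Wehner \cite{Smolin2007}. I expect the heart of the argument to be this lifting step --- reconciling the scalar-avoidance (quadratic-form) constraint with the requirement that every coordinate be fully covered when $n$ is odd --- and I would apply the same idea at each odd length to finish the case.
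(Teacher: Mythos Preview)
First, note that the paper does not prove this theorem at all: it is quoted verbatim from Grassl and R\"otteler \cite{Grassl2015} and used only as motivation. So there is no ``paper's own proof'' to compare against; what follows is an assessment of your proposal on its own merits.

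Your Chinese-remainder reduction to prime-power alphabets is sound, and your explicit stabilizers for the two easy regimes are correct. For even $n$ and any prime power $q$ (including $q=2$), the pair $g_{1}=X(1)^{\otimes n}$, $g_{2}=\bigotimes_{j}Z\bigl((-1)^{j-1}\bigr)$ commutes, has order $q^{2}$, contains no nontrivial scalar, and every coordinate determinant $\Delta_{j}=(-1)^{j-1}$ is a unit, so the code is a nondegenerate $\left[\!\left[n,n-2,2\right]\!\right]_{q}$. For odd $n$ and odd prime-power $q$ your choice of units $u_{j}$ summing to zero (pairs $1,-1$ plus the triple $1,1,q-2$) works for the same reason.

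The genuine gap is the case $n$ odd and $q=2^{a}$ with $a\ge 2$, which you yourself flag as unfinished. You correctly prove that no \emph{nondegenerate} $\mathbb{Z}_{2^{a}}$-stabilizer of free rank two can do the job, since that would require an odd number of odd residues summing to $0\pmod{2^{a}}$. But your two proposed workarounds are not carried out: the ``non-free lifting'' idea is only a heuristic, with the crucial scalar-avoidance and coverage conditions left as hopes rather than verified constraints, and the non-additive route would at best yield an $\left(\!\left(n,q^{n-2},2\right)\!\right)_{q}$ code, not the stabilizer code the $\left[\!\left[\cdot\right]\!\right]$ notation promises. As written, the proposal does not establish the theorem for, e.g., $q=4$ and $n=3$.

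The clean fix --- and essentially what \cite{Grassl2015} does --- is to abandon $\mathbb{Z}_{2^{a}}$ for this sub-case and use the Pauli group defined via the \emph{field} $\mathbb{F}_{2^{a}}$ instead. Over $\mathbb{F}_{2^{a}}$ with $a\ge 2$ there is no parity obstruction: for odd $n\ge 3$ one can take the triple $1,\alpha,1+\alpha$ (any $\alpha\notin\{0,1\}$, which exists since $|\mathbb{F}_{2^{a}}|\ge 4$) together with $(n-3)/2$ pairs $\beta,\beta$, all nonzero and summing to $0$ in characteristic $2$. The resulting $\mathbb{F}_{2^{a}}$-stabilizer gives a genuine $\left[\!\left[n,n-2,2\right]\!\right]_{2^{a}}$ code on the same Hilbert space $(\mathbb{C}^{2^{a}})^{\otimes n}$, and your CRT assembly then finishes the general case. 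Without this (or an equivalent explicit construction), your proof remains incomplete precisely in the regime the theorem is most delicate.
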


While the construction below will certainly produce a hybrid stabilizer code when $q\not\equiv2\mod 4$, it will not be a genuine hybrid code, as the previous theorem implies that there will be an $\left[\!\left[n,n-2,2\right]\!\right]_{q}$ stabilizer code that can be transformed into a hybrid code using the first construction in Proposition \ref{trivial}. When $q=2$, Equation \ref{rainsbound} tells us that there can be no $\left[\!\left[n,n-2,2\right]\!\right]_{2}$ quantum code, implying that the family given in \cite{Nemec2019} is indeed genuine. To the best of our knowledge there are no known $\left[\!\left[n,n-2,2\right]\!\right]_{q}$ codes when $q=4r+2$, which is why the codes using the construction below may in fact be genuine. However, since $\mathbb{F}_{4r+2}$ does not exist except when $r=0$, we instead construct our codes over $\mathbb{Z}_{q}$.

\begin{proposition}
Let $n$ be odd. Then there exists an $\left[\!\left[n,n-3\!:\!1,2\right]\!\right]_{\mathbb{Z}_{q}}$ hybrid code.
\end{proposition}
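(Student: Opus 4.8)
The plan is to generalize the binary $\left[\!\left[n,n-3\!:\!1,2\right]\!\right]_{2}$ construction from \cite{Nemec2019} by building everything from a single self-dual-like stabilizer on $\mathbb{Z}_q$, following the graph-state recipe of \cite{Hu2008, Looi2008, Smolin2007}. Concretely, I would take the seed code $\mathcal{C}_0$ to be the $\left[\!\left[n,n-3,2\right]\!\right]_{\mathbb{Z}_q}$ stabilizer code whose stabilizer $\mathcal{S}_0$ is generated (over $\mathbb{Z}_q$) by the ``star'' operators of a suitable graph on $n$ vertices — for $n$ odd the natural choice is a cycle or a near-complete graph tailored so that the quotient $Z(\mathcal{S}_0)/\mathcal{S}_0$ has the right rank $3$ in the symplectic/weight sense. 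First I would write down generators $g_1,\dots,g_{n-3}$ of $\mathcal{S}_0$ explicitly as products of $X(a)$ and $Z(b)$ operators and verify abelianness: over $\mathbb{Z}_q$ this is the condition that a certain symmetric bilinear form on the exponent vectors vanishes, which is where the graph-state structure (symmetric adjacency matrix) does the work. Then I would identify the logical operators — $2(n-3)$ of them spanning the quantum part — together with two ``extra'' elements of $Z(\mathcal{S}_0)\setminus\mathcal{S}_0$, call them $\bar X_c, \bar Z_c$, that commute with $\mathcal{S}_0$ but whose product structure realizes a single classical bit's worth of translation.

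The classical bit is encoded by choosing the translation operator $t_1\in\mathcal{E}_n\setminus Z(\mathcal{S}_0)$ (with $t_0=I$), so that $\mathcal{C}_1 = t_1\mathcal{C}_0$ is orthogonal to $\mathcal{C}_0$; here $M=2$, so we only need one nontrivial translate, and $t_1$ should anticommute (in the $\omega$-phase sense over $\mathbb{Z}_q$) with exactly one stabilizer generator so that $P_1$ projects onto the $\omega$-eigenspace rather than the $1$-eigenspace of that generator. The heart of the proof is then verifying the generalized Knill–Laflamme conditions $P_b E P_a = \lambda_{E,a}\delta_{a,b}P_a$ for every weight-one error $E = X(a)Z(b)$ on every qudit $j$. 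I would split this into three cases: (i) $a=b=0$ (the diagonal, seed-code) case, which reduces to showing $\mathcal{C}_0$ itself detects weight-one errors, i.e. no weight-one element of $\mathcal{E}_n$ lies in $Z(\mathcal{S}_0)\setminus\mathcal{S}_0$ — a distance-$2$ statement about the graph code over $\mathbb{Z}_q$; (ii) $a=b=1$, which by conjugating through $t_1$ reduces to case (i) plus the observation that $t_1^\dagger E t_1$ is again weight one; and (iii) $a\neq b$, where $P_1 E P_0 = 0$ must hold for all weight-one $E$, equivalently $t_1^\dagger E$ never lies in $Z(\mathcal{S}_0)$ for weight-one $E$ — this is the genuinely new ``hybrid'' constraint and says $t_1$ is not within weight one of the centralizer.

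The main obstacle I anticipate is case (iii) together with the simultaneous demand in case (i): we need $Z(\mathcal{S}_0)\setminus\mathcal{S}_0$ to contain no weight-one operators \emph{and} to contain a coset represented by $t_1$ that stays at ``distance'' more than one from $\mathcal{S}_0$, all while $q$ may be $2 \bmod 4$ so that $\mathbb{Z}_q$ is not a field and the usual $\mathbb{F}_q$-linear-algebra arguments (dimension counting, nondegeneracy of the symplectic form) fail. The fix, as in \cite{Smolin2007, Hu2008}, is to work with the module structure over $\mathbb{Z}_q$ directly: track exponent vectors in $\mathbb{Z}_q^n \times \mathbb{Z}_q^n$, use the CRT to reduce mod each prime power dividing $q$, and check the weight conditions prime-by-prime — the odd-prime-power parts behave like the field case, and the $2 \bmod 4$ obstruction is isolated to the single factor $\mathbb{Z}_2$, where the original binary construction of \cite{Nemec2019} already handles it. Once the exponent-vector bookkeeping is set up, verifying the three cases is a routine (if slightly tedious) computation, and assembling them via the equivalent codeword form $\bra{c_j^{(b)}}E\ket{c_i^{(a)}} = \lambda_{E,a}\delta_{i,j}\delta_{a,b}$ yields the claimed $\left[\!\left[n,n-3\!:\!1,2\right]\!\right]_{\mathbb{Z}_q}$ parameters.
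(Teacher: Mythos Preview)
Your proposal has a concrete error and is otherwise too schematic to count as a proof.

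The error: you write ``here $M=2$, so we only need one nontrivial translate.'' In the notation $\left[\!\left[n,k\!:\!m,d\right]\!\right]_{q}$ the classical exponent behaves like the quantum one, so the code carries $q^{m}$ classical messages, not two. For $q>2$ you therefore need $q$ inner codes $\mathcal{C}_{0},\ldots,\mathcal{C}_{q-1}$, and the off-diagonal Knill--Laflamme conditions $P_{b}EP_{a}=0$ must hold for every pair $a\neq b$ in $\mathbb{Z}_{q}$, not just one. The paper makes this explicit: the classical label is $m\in\mathbb{Z}_{q}$ and there are $q$ inner codes.

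Beyond that, the plan never commits to an actual code. ``A cycle or a near-complete graph tailored so that the quotient has the right rank'' is a wish, not a construction; without fixed generators one cannot check the distance condition, and over $\mathbb{Z}_{q}$ with $q$ even the rank of the adjacency-based symplectic form for a cycle is genuinely parity-sensitive. The CRT reduction you sketch is reasonable heuristics but does not substitute for exhibiting the code.

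The paper takes a different and entirely concrete route, bypassing the stabilizer/centralizer formalism. It writes the codewords down explicitly: the first $N-1$ qudits ($N$ the code length) are grouped into two-qudit blocks carrying states $\ket{\phi_{a_{i},b_{i}}}=q^{-1}\sum_{c\in\mathbb{Z}_{q}^{2}}\omega^{(c_{1}-a_{i})(c_{2}-b_{i})}\ket{c}$, so the implicit ``graph'' is a perfect matching, not a cycle; the last qudit carries the Fourier-basis state $\ket{\psi_{m}}=q^{-1/2}\sum_{c}\omega^{mc}\ket{c}$; and the labels are constrained by $\sum_{i}a_{i}=0$ and $\sum_{i}b_{i}=m$. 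Orthogonality and the hybrid Knill--Laflamme conditions are then verified by direct inner-product computation for each weight-one $E$, the key degeneracy appearing when $X(u)$ acts on the last qudit and produces the $m$-dependent scalar $\omega^{-mu}$. No centralizers, symplectic forms, or CRT are invoked.
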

\begin{proof}
Let $a,b\in\mathbb{Z}_{q}^{n}$, $m\in\mathbb{Z}_{q}$, and $\omega$ a primitive $q$-th root of unity. Define the following states:
\begin{equation*}
\ket{\phi_{a,b}}=\frac{1}{q^{n}}\sum\limits_{c\in\mathbb{Z}_{q}^{2n}}\omega^{\sum_{i=1}^{n}\left(c_{2i-1}-a_{i}\right)\left(c_{2i}-b_{i}\right)}\ket{c}
\end{equation*}
\begin{equation*}
\ket{\psi_{m}}=\frac{1}{\sqrt{q}}\sum\limits_{c\in\mathbb{Z}_{q}}\omega^{mc}\ket{c}
\end{equation*}
Define the inner code $\mathcal{C}_{m}$ as follows:
\begin{equation*}
\mathcal{C}_{m}=\left\langle\ket{\phi_{a,b}}\otimes\ket{\psi_{m}}\middle|a,b\in\mathbb{Z}_{q}^{n},m\in\mathbb{Z}_{q}\sum\limits_{i=1}^{n}a_{i}=0, \sum\limits_{i=1}^{n}b_{i}=m\right\rangle
\end{equation*}

The state $\ket{\phi_{a,b}}$ is the tensor product of two-qubit states of the form
\begin{equation*}
\ket{\phi_{a_{i},b_{i}}}=\frac{1}{q}\sum\limits_{c\in\mathbb{Z}_{q}^{2}}\omega^{\left(c_{1}-a_{i}\right)\left(c_{2}-b_{i}\right)}\ket{c}.
\end{equation*}
For two of these states $\ket{\phi_{a_{i},b_{i}}},\ket{\phi_{a'_{i},b'_{i}}}$ we have
\begin{align*}
\braket{\phi_{a_{i},b_{i}}}{\phi_{a'_{i},b'_{i}}} & = \frac{1}{q^{2}}\sum\limits_{c\in\mathbb{Z}_{q}^{2}}\omega^{\left(c_{1}-a'_{i}\right)\left(c_{2}-b'_{i}\right)-\left(c_{1}-a_{i}\right)\left(c_{2}-b_{i}\right)} \\
& = \frac{\omega^{a'_{i}b'_{i}-a_{i}b_{i}}}{q^{2}}\sum\limits_{c\in\mathbb{Z}_{q}^{2}}\omega^{c_{1}\left(b'_{i}-b_{i}\right)+c_{2}\left(a'_{i}-a_{i}\right)} \\
& = \frac{\omega^{a'_{i}b'_{i}-a_{i}b_{i}}}{q^{2}}\left(\sum\limits_{c_{1}\in\mathbb{Z}_{q}}\omega^{c_{1}\left(b'_{i}-b_{i}\right)}\right)\left(\sum\limits_{c_{2}\in\mathbb{Z}_{q}}\omega^{c_{2}\left(a'_{i}-a_{i}\right)}\right) \\
& = \begin{cases} 1 & \text{if } a_{i}=a'_{i} \text{ and } b_{i}=b'_{i} \\ 0 & \text{otherwise} \end{cases}.
\end{align*}
Therefore for the full states $\ket{\phi_{a,b}},\ket{\phi_{a',b'}}$ we have the same:
\begin{equation*}
\braket{\phi_{a,b}}{\phi_{a',b'}}=\begin{cases} 1 & \text{if } a=a' \text{ and } b=b' \\ 0 & \text{otherwise} \end{cases}.
\end{equation*}
Similarly, for $\ket{\psi_{m}},\ket{\psi_{m'}}$ we have
\begin{equation*}
\braket{\psi_{m}}{\psi_{m'}}=\begin{cases} 1 & \text{if } m=m' \\ 0 & \text{otherwise} \end{cases}.
\end{equation*}
Thus all of the codewords are orthogonal to one another.

Consider two codewords $\ket{\phi_{a,b}}\otimes\ket{\psi_{m}},\ket{\phi_{a',b'}}\otimes\ket{\psi_{m'}}$. Suppose that a Pauli-$X\!\left(u\right)$ error occurs on the first $n-1$ qudits. Without loss of generality, we can assume that the error occurred on either the first or second qudit. If $m\neq m'$, $a_{i}\neq a'_{i}$, or $b_{i}\neq b'_{i}$ for $1<i\leq n$, then $$\left(\bra{\phi_{a,b}}\otimes\bra{\psi_{m}}\right)X\!\left(u\right)\left(\ket{\phi_{a',b'}}\otimes\ket{\psi_{m'}}\right)=0$$ by the orthogonality relations above. Therefore we can restrict our attention to the case where $m=m'$, $a_{i}=a'_{i}$, and $b_{i}=b'_{i}$ for $1<i\leq n$). We note that these restrictions along with the requirement that the $a_{i}$ and $a'_{i}$ sum to 0 and $b_{i}$ and $b'_{i}$ sum to $m$ and $m'$ respectively completely determine the values of $a_{1}$ and $b_{1}$ and in particular we must have $a_{1}=a'_{1}$ and $b_{1}=b'_{1}$. If the error occurred on the first qudit, we have 
\begin{align*}
\bra{\phi_{a_{1},b_{1}}}X\!\left(u\right)\ket{\phi_{a_{1},b_{1}}} & = \frac{1}{q^{2}}\left(\sum\limits_{c\in\mathbb{Z}_{q}^{2}}\omega^{-\left(c_{1}-a_{1}\right)\left(c_{2}-b_{1}\right)}\bra{c_{1}c_{2}}\right)\left(\sum\limits_{c\in\mathbb{Z}_{q}^{2}}\omega^{\left(c_{1}-a_{1}\right)\left(c_{2}-b_{1}\right)}\ket{\left(c_{1}+u\right)c_{2}}\right) \\
& = \frac{1}{q^{2}}\left(\sum\limits_{c\in\mathbb{Z}_{q}^{2}}\omega^{-\left(c_{1}-a_{1}\right)\left(c_{2}-b_{1}\right)}\bra{c_{1}c_{2}}\right)\left(\sum\limits_{c\in\mathbb{Z}_{q}^{2}}\omega^{\left(c_{1}-a_{1}-u\right)\left(c_{2}-b_{1}\right)}\ket{c_{1}c_{2}}\right) \\
& = \frac{1}{q^{2}}\sum\limits_{c_{2}\in\mathbb{Z}_{q}^{2}}\omega^{u\left(b_{1}-c_{2}\right)} \\
& = \begin{cases} 1 & \text{if } u=0 \\ 0 & \text{otherwise} \end{cases}.
\end{align*}
A similar argument holds if the error occurs on the second qudit, thus the code can detect any single Pauli-$X\!\left(u\right)$ error that occurs on the first $n-1$ qudits.

Now suppose that a Pauli-$Z\!\left(v\right)$ error occurs on the first $n-1$ qudits. As above, we restrict our attention to the case where $a=a'$, $b=b'$, $m=m'$, and the error occurs on one of the first two qudits. If the error occurs on the first qudit we have
\begin{align*}
\bra{\phi_{a_{1},b_{1}}}Z\!\left(v\right)\ket{\phi_{a_{1},b_{1}}} & = \frac{1}{q^{2}}\sum\limits_{c\in\mathbb{Z}_{q}^{2}}\omega^{\left(c_{1}-a_{1}\right)\left(c_{2}-b_{1}\right)-\left(c_{1}-a_{1}\right)\left(c_{2}-b_{1}\right)+vc_{1}} \\
& = \frac{1}{q}\sum\limits_{c_{1}\in\mathbb{Z}_{q}}\omega^{vc_{1}} \\
& = \begin{cases} 1 & \text{if } v=0 \\ 0 & \text{otherwise} \end{cases}.
\end{align*}
The same argument holds if the error occurs on the second qudit, thus the code can detect any single Pauli-$Z\!\left(v\right)$ error that occurs on the first $n-1$ qudits.

Now suppose that a Pauli error $E$ occurs on the last qudit. If $a\neq a'$, $b\neq b'$, or $m\neq m'$, then the orthogonality of the first $n-1$ qudits gives us $$\left(\bra{\phi_{a,b}}\otimes\bra{\psi_{m}}\right)E\left(\ket{\phi_{a',b'}}\otimes\ket{\psi_{m'}}\right)=0,$$ so again we only need to examine the case where the two codewords are the same.

If we have a Pauli-$X\!\left(u\right)$ error on the last qudit we have
\begin{align*}
\bra{\psi_{m}}X\!\left(u\right)\ket{\psi_{m}} & = \frac{1}{q}\left(\sum\limits_{c\in\mathbb{Z}_{q}}\omega^{-mc}\bra{c}\right)\left(\sum\limits_{c\in\mathbb{Z}_{q}}\omega^{mc}\ket{c+u}\right) \\
& = \frac{1}{q}\sum\limits_{c\in\mathbb{Z}_{q}}\omega^{-mu} \\
& = \omega^{-mu},
\end{align*}
meaning that the error is degenerate. Note that since the value depends on the classical information $m$, each inner code can detect the error but the outer code (as a quantum code) cannot.

If a Pauli-$Z\!\left(v\right)$ error occurs on the last qudit we have
\begin{align*}
\bra{\psi_{m}}Z\!\left(v\right)\ket{\psi_{m}} & = \frac{1}{q}\left(\sum\limits_{c\in\mathbb{Z}_{q}}\omega^{-mc}\bra{c}\right)\left(\sum\limits_{c\in\mathbb{Z}_{q}}\omega^{mc+vc}\ket{c}\right) \\
& = \frac{1}{q}\sum\limits_{c\in\mathbb{Z}_{q}}\omega^{vc} \\
& = \begin{cases} 1 & \text{if } v=0 \\ 0 & \text{otherwise} \end{cases}.
\end{align*}

\end{proof}

We also mention in passing that this construction can be generalized further to codes over Frobenius rings by replacing the primitive root of unity by an irreducible additive character of the additive group of the ring \cite{Nadella2012}.

\section{Conclusion and Discussion}

Hybrid codes simultaneously transmit both quantum and classical information across quantum channels, and can provide an advantage over using quantum codes for simultaneous transmission. We have generalized a family of single error-detecting codes constructed in \cite{Nemec2019} from the binary case to the nonbinary case. While it is known that the construction gives genuine hybrid codes when $q=2$, the existence of quantum codes with the similar parameters when $q\equiv0,1,3\mod 4$ means the construction does not produce genuine hybrid codes in all cases. One open question is whether or not the codes given by the construction are always genuine when $q\equiv2 \mod 4$. As the code family here is the only construction of nonbinary hybrid codes, further investigation is needed. 

\bibliographystyle{IEEEtran}
%\bibliography{nonbinary}
% Generated by IEEEtran.bst, version: 1.14 (2015/08/26)

\end{document}